\let\c@author\relax
\newcommand{\pvrec}{\textsc{PV-Rec}}
\newcommand{\pdrec}{\textsc{PD-Rec}}
\newcommand{\pvman}{\textsc{PV-Man}}
\newcommand{\pdman}{\textsc{PD-Man}}
\newcommand{\domset}{\textsc{Dominating Set}}
\newcommand{\mulcliq}{\textsc{Multi-Colored Clique}}
\newcommand{\bd}{\textsc{$B_\mathcal{D}$}}
\newcommand{\ba}{\textsc{$B_\mathcal{A}$}}
\author{Kishen N. Gowda}{Indian Institute of Technology, Gandhinagar
}{kishen.gowda@iitgn.ac.in}{}{}
\author{Neeldhara Misra}{Indian Institute of Technology, Gandhinagar
}{neeldhara.m@iitgn.ac.in}{}{}
\author{Vraj Patel}{Indian Institute of Technology, Gandhinagar
}{vraj.patel@iitgn.ac.in}{}{}
\authorrunning{K. Gowda et al.}
\title{A Parameterized Perspective on Attacking and Defending Elections}
\keywords{Game Theory, Parameterized Complexity, Elections, Computational Social Choice}
\begin{document}

\maketitle
\begin{abstract}
    We consider the problem of protecting and manipulating elections by recounting and changing ballots, respectively. Our setting involves a plurality-based election held across multiple districts, and the problem formulations are based on the model proposed recently by~[Elkind et al, IJCAI 2019]. It turns out that both of the manipulation and protection problems are NP-complete even in fairly simple settings. We study these problems from a parameterized perspective with the goal of establishing a more detailed complexity landscape. The parameters we consider include the number of voters, and the budgets of the attacker and the defender. While we observe fixed-parameter tractability when parameterizing by number of voters, our main contribution is a demonstration of parameterized hardness when working with the budgets of the attacker and the defender. 
    \keywords{Elections \and W-hardness \and Parameterized Complexity}
    \end{abstract}
    \section{Introduction}
    Electoral fraud and errors in consolidation of large-scale voting data are fundamental issues in democratic societies. To counteract issues with malicious manipulations and accidental errors in the counting of votes, most electoral systems allow for strategic recounting of ballots to verify the election outcome. Recounting is generally an expensive and high-stakes process, and it would be desirable to formalize the problem so as to capture the relevant trade-offs and possibly pursue an algorithmic approach to finding an optimal recounting strategy. Such a framework was recently proposed by Elkind et al.~\cite{ElkindGORV19}, where the authors considered the problems of protecting and manipulating elections by recounting and changing ballots, respectively. These problems are modeled as a Stackelberg game involving an attacker and a defender. Both players work with limited budgets (say $\mathcal{B}_A$ and $\mathcal{B}_D$), and the question is if the players can develop optimal strategies for their desired outcomes. 

In this model, the election is spread out across multiple districts, with the voter preferences aggregated according to the plurality voting rule in one of two different ways, which we will explain explicitly in a moment. The manipulation problem is the following. The attacker has to optimize, typically with the goal of turning the election in favor of a particular candidate that he may have in mind, an attack strategy that involves manipulating the votes in at most $\mathcal{B}_A$ districts while ensuring that the impact of the attack persists \emph{even if} the defender restores at most $\mathcal{B}_D$ of these districts to their original state. In the recounting problem, the defender is given complete information about the original and manipulated voting profiles and she can restore the state of at most $\mathcal{B}_D$ districts with the goal of making the ``true winner'' win the repaired election. 

\paragraph*{Known Results.} The results obtained in~\cite{ElkindGORV19} already demonstrate the hardness of the attacker's and defender's problems for two natural ways of aggregating the votes: (1) \emph{Plurality over Voters} (PV), where districts are only used for the purpose of collecting the ballots and the winner is selected among the candidates that receive the largest number of votes in total, and (2) \emph{Plurality over Districts} (PD), where each district selects a preferred candidate using the Plurality rule, and the overall winner is chosen among the candidates supported by the largest number of districts, or the set of districts with largest total weight if the districts have weights associated with them. We briefly recall the main highlights from~\cite{ElkindGORV19}, since this provides the context for our contributions. It turns out that the recounting problem is NP-complete for both implementations of the plurality rule, even when there are only three candidates (this result assumes a succinct representation of the votes) or even when the votes are specified in unary. The problem is tractable when PD is employed over unweighted districts. On the other hand, the manipulation problem is NP-hard for PD even with unweighted districts, and in fact, $\Sigma_2^P$-complete for PD with succinct input even when there are only three candidates. Further, it is NP-hard for PV, again even when there are only three candidates (in the setting of succinct input) or even when the votes are specified in unary.


\paragraph*{Our Contributions.} Our main contribution is to establish the parameterized intractability of both the recounting and manipulation problems under both implementations of the plurality voting rule when parameterized by the budget of the players. Our contributions directly address a direction suggested by~\cite{ElkindGORV19}. In particular, we obtain the following results:

\begin{theorem}
The \pvrec{} and \pdrec{} problems are W[2]-hard and W[1]-hard, respectively, when parameterized by the budget of the defender, and are FPT when parameterized by the number of districts.
\end{theorem}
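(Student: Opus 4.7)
\emph{FPT parameterization by number of districts.} When the number of districts $m$ is the parameter, both problems admit a brute-force search. Enumerate every subset $S$ of districts with $|S| \leq \mathcal{B}_D$; for each, form the profile in which districts in $S$ take their ballots from the original profile $P$ and the rest keep their ballots from the manipulated profile $Q$, and check in polynomial time whether the true winner wins the resulting election under the relevant aggregation rule. This yields a $2^m \cdot \mathrm{poly}(|I|)$ algorithm for both \pvrec{} and \pdrec{}.

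\emph{W[2]-hardness of \pvrec{}.} I would reduce from \domset{}. Given $(G,k)$ with $V(G)=\{v_1,\dots,v_n\}$, set $\mathcal{B}_D = k$ and introduce candidates $c^{\star}$ (the true winner), a dummy $d$, and one ``vertex-candidate'' $c_j$ per vertex $v_j$. For each $v_i$ build a district $D_i$ whose ballots in $P$ and $Q$ are designed so that restoring $D_i$ adds exactly one vote to $c^{\star}$ and subtracts exactly one vote from every $c_j$ with $v_j \in N[v_i]$; the dummy $d$ is used in each district to absorb the per-district voter-count difference between $P$ and $Q$ so that both profiles have the same number of ballots per district. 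Finally, add a single district $D_0$ with $P = Q$ that gives each $c_j$ a carefully chosen amount of additional support, calibrated so that in $Q$ every $c_j$ leads $c^{\star}$ by exactly $k$ votes (and $c^{\star}$ still wins in $P$ by a wide margin). Restoring any size-$k$ set $S$ of vertex districts then gives $c^{\star}$ exactly $k$ extra votes and subtracts $|S \cap N[v_j]|$ from each $c_j$; so $c^{\star}$ strictly beats every $c_j$ iff $\{v_i : D_i \in S\}$ is a dominating set of $G$.

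\emph{W[1]-hardness of \pdrec{}.} I would reduce from \mulcliq{} with color classes $V_1,\dots,V_k$, setting $\mathcal{B}_D = k + \binom{k}{2}$. The construction uses two kinds of weighted gadgets: for each color $i$, a vertex-selection gadget in which restoring one of $|V_i|$ ``vertex districts'' commits to a particular vertex of $V_i$; for each color pair $(i,j)$, an edge-verification gadget in which restoring one of the edge-districts certifies an adjacency between the committed endpoints. District weights and the winners chosen in $P$ and $Q$ are set so that (i) the per-gadget budget is tight, forcing exactly one restoration per vertex-gadget and one per edge-gadget, and (ii) the true winner picks up enough weight to take over the election iff the committed vertex choices are pairwise adjacent in $G$, i.e., form a multicolored clique. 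Since $\mathcal{B}_D = f(k)$, this is a parameterized reduction.

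\emph{Main obstacle.} The \pvrec{} construction is essentially a clean covering reduction; the only care required is the balancing district $D_0$ to handle degree variation. The real technical work is in the \pdrec{} gadgetry: because each district contributes only its own plurality winner to the overall tally, vertex-choice and edge-existence constraints must be enforced through weights and overlapping candidate roles. The delicate part will be to rule out ``cheating'' restorations --- combinations that select two vertex-districts in one color class, or certify an edge-district whose endpoints disagree with the vertex-districts chosen for colors $i$ and $j$ --- while still guaranteeing that every genuine multicolored clique furnishes a valid defender strategy.
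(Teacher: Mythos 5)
Your FPT argument (enumerate all subsets of districts to restore and evaluate each resulting profile) is exactly the paper's, and your \pvrec{} reduction is essentially the paper's as well: both reduce from \domset{} with one district per vertex whose restoration decrements every $c_u$ with $u \in N[v]$ by one, plus an immutable baseline district for calibration. The only difference is bookkeeping --- the paper keeps $w$'s score fixed at $N$, makes every main candidate tie with $w$ at $N$ in the manipulated profile, and lets tie-breaking (main candidates above $w$, $w$ above dummies) do the work, whereas you give $c^{\star}$ one extra vote per restored district against a lead of exactly $k$. Your variant is sound, but do check that your single absorbing dummy $d$ cannot overtake $c^{\star}$ after $k$ restorations (it picks up roughly $|N[v_i]|-1$ votes per restored district); the paper sidesteps this with one dummy per vertex, a bound of $N$ on any dummy's score, and tie-breaking that places $w$ above all dummies.

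The genuine gap is in the \pdrec{} reduction. You correctly identify \mulcliq{}, the vertex-selection and edge-verification gadgets, and even name the crux --- ruling out restorations where the edge districts chosen for the pair $(i,j)$ have endpoints disagreeing with the vertex districts chosen for colors $i$ and $j$ --- but you then defer precisely that step, and with a budget of $k+\binom{k}{2}$ and only two gadget types there is no proposed mechanism linking an edge district to its endpoints' vertex districts. The paper enforces consistency through a forced cascade: restoring an edge district $D_{uv}$ hands that district to a helper candidate $h_e$, who becomes a new rival; the only cure is to restore the weight-two support district $S_e$, which promotes the auxiliary candidate $a_e$ to a rival; curing that forces restoring both transfer districts $T_{e,u}$ and $T_{e,v}$, which are won in the original profile by the main candidates $c_u$ and $c_v$, so those now threaten $w$; the only remedy is to restore the weight-$k$ critical districts $D^{\star}_u$ and $D^{\star}_v$. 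With budget $2k+5\binom{k}{2}$ there is room for only $k$ critical recounts, so the $\binom{k}{2}$ restored edges must span exactly $k$ vertices, i.e.\ form a multicolored clique. Without some such chaining device and the accompanying budget accounting, the reverse direction of your reduction does not go through: a defender could certify the pair $(i,j)$ with an arbitrary $V_i$--$V_j$ edge unrelated to the selected vertices.
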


\begin{theorem}
The \pvman{} and \pdman{} problems are W[1]-hard when parameterized by the budget of the attacker (even when the defender budget is zero), and are FPT when parameterized by the number of voters.
\end{theorem}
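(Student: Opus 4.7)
I would address the two parts separately. For the FPT claim (parameter: number of voters $n$), I rely on brute-force enumeration. Candidates who never receive any vote can be deleted without changing the election outcome, so at most $n+1$ candidates are relevant (including the attacker's preferred candidate $p$). The set of possible post-manipulation voting profiles therefore has size at most $(n+1)^n$. For each such profile I check whether it differs from the original in at most $\mathcal{B}_A$ districts, and whether every defender response --- reverting at most $\mathcal{B}_D$ of those districts, of which there are at most $2^{\mathcal{B}_A}\le 2^n$ --- still leaves $p$ as the plurality (PV or PD) winner. Each such check is polynomial in the input, giving an $f(n)\cdot\mathrm{poly}(|I|)$ algorithm that works for both variants.

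For the W[1]-hardness part I plan a reduction from \mulcliq{}. Given a graph $G$ whose vertices are partitioned into color classes $V_1,\ldots,V_k$, I would construct an election with $\mathcal{B}_A = k+\binom{k}{2}$ and $\mathcal{B}_D=0$, together with two families of districts. The first family consists of $k$ \emph{vertex-selection} districts $D_1,\ldots,D_k$, one per color class: a manipulation of $D_i$ encodes the choice of a single vertex from $V_i$. The second family consists of $\binom{k}{2}$ \emph{edge-verification} districts $E_{ij}$: a manipulation of $E_{ij}$ encodes the choice of an edge between $V_i$ and $V_j$. Auxiliary \emph{bookkeeping} candidates, one per vertex, are used to arithmetically couple the two families, so that the profile installed in $E_{ij}$ is consistent with the profiles installed in $D_i$ and $D_j$ precisely when the edge chosen in $E_{ij}$ has both endpoints among the two vertices chosen there.

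For \pdman{}, vote counts inside each district are calibrated so that the default (unmanipulated) plurality winner is some opponent, while a consistent manipulation shifts the district winner to $p$; the number of district-wins needed for $p$ to dominate the global district-plurality count is set to exactly $k+\binom{k}{2}$, forcing the attacker to spend their entire budget on vertex- and edge-districts in a consistent way. For \pvman{} the same scaffold is reused, but margins are balanced in the total-vote space: a consistent clique encoding gives $p$ exactly enough extra votes to surpass every rival's total, while any inconsistency leaves some bookkeeping candidate ahead of $p$. I expect the principal technical obstacle to be the design of this consistency coupling: the attacker enjoys complete freedom within any manipulated district, so the gadget must guarantee that \emph{every} deviation from a clique encoding either exceeds the budget or leaves some rival ahead of $p$. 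Once the coupling is in place, adapting the construction between PV and PD should be a matter of rebalancing weights rather than altering the combinatorial structure.
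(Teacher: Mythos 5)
Your FPT argument is essentially the paper's: bound the number of relevant candidates by a function of $n$, enumerate manipulations, and check the defender's best response by brute force. One small repair is needed: in the manipulation problem the attacker may \emph{want} to dump displaced votes onto candidates who receive no votes originally (a safe sink that threatens nobody), so you cannot simply delete all zero-vote candidates; the paper instead keeps up to $2n$ candidates (those with a nonzero score in either the original or the manipulated profile), and with that adjustment your enumeration goes through for both PV and PD.

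The W[1]-hardness part has a genuine gap, and it sits exactly where you flagged it: the ``consistency coupling'' is not a deferred technicality but the entire proof, and your encoding makes it unlikely to be achievable. You propose one district per color class and one per pair of classes, with the chosen vertex or edge encoded in \emph{which profile the attacker installs}. But a manipulation grants the attacker complete freedom over the vote vector inside a manipulated district (subject only to the total $n_i$ and $\gamma_i$); there is no mechanism to restrict him to a discrete menu of $|V_i|$ ``vertex profiles,'' and any arithmetic coupling through bookkeeping candidates can be defeated by the attacker simply redistributing votes among those bookkeeping candidates within his chosen districts. The paper avoids this by inverting the encoding: it creates one primary district per \emph{vertex} and two secondary districts per \emph{edge} (ordered pairs), sets $\ba = k^2 = k + 2\binom{k}{2}$, pads every district to $\ell$ voters, and gives the rivals score $F = \ell k^2$ while $w$ starts at zero. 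Since $F$ is the maximum score $w$ can possibly reach, the installed profile is \emph{forced} (transfer every vote in every attacked district to $w$), and the combinatorial choice lives entirely in \emph{which} districts are attacked; challenger candidates per color class and per ordered pair of classes then force one district of each type, and the main candidates $c_v$ (at $F+k-2$) enforce that the selected vertices and edges are mutually incident, i.e., form a multicolored clique. For \pdman{} the paper does not build a fresh clique gadget at all: it reduces from \pdrec{} by swapping the original and manipulated profiles and setting $\bd = 0$. If you want to salvage your plan, you should move the selection into the district-choice dimension and engineer the score of $p$ to be achievable only by a maximal, fully determined manipulation of every attacked district.
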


Our results rely on reductions from traditional problems such as \mulcliq{} and \domset{}. Our hardness results work even when the input is specified in unary. It is reasonably natural to imagine that these parameters would be small in practice, since they correspond to real-world budget constraints. To that end, our results here bring mixed news: on the one hand, the hardness of mounting an attack may be viewed as a positive outcome, but on the other hand, it turns out that the problem of optimally reversing damages is hard as well. This triggers the natural question of whether the recounting problem admits good approximations when treated as an optimization problem, either on the criteria of the budget or on the criteria of the quality of the winning candidate that we are able to restore. The FPT algorithms that we present rely mostly on straightforward enumeration, and it would be interesting to improve the running times in question. 

\paragraph*{Related Work.} While we build most closely on the work of Elkind et al~\cite{ElkindGORV19}, and much of the work on manipulation in the literature of social choice does not consider the possibility of a counter-attack, we note that some recent investigations have been carried out in a spirit that is similar to our present contribution. Dey et al.~\cite{ijcai2019-34} also consider a parameterized approach to protecting elections, where the voting rule in question is the Condorcet rule. They build on the work of Yin et al.~\cite{YinVAH16}, who study a pre-emptive approach to protecting elections. In these models, the defender allocates resources to guard some of the electoral districts, so that the votes there cannot be influenced, and the attacker responds afterward. This is in contrast to our setting, where the defender makes the second move. The social choice literature is rich in studies of manipluation, control, and bribery. For a detailed overview we direct the reader to the surveys~\cite{ConitzerW16,FaliszewskiR16}.
    
    \section{Preliminaries}

We recall the setting from~\cite{ElkindGORV19}. We consider elections over a \emph{candidate} set $C$, where $|C| = m$. There are $n$ \emph{voters} who are partitioned into $k$ pairwise disjoint \emph{districts}. The set of all districts is $\mathcal{D}$. For each $i \in \mathcal{D}$, let $n_i = |i|$. We note that in this context, $i$ denotes a subset of voters. For each $i \in \mathcal{D}$, district $i$ has a \emph{weight} $w_i$, which is a positive integer. We say that an election is \emph{unweighted} if $w_i = 1$ for all $i \in \mathcal{D}$. Each voter votes for a single candidate in $C$. For each $i \in \mathcal{D}$ and each $c \in C$ let $\textbf{v}_{ic}$ denote the number of votes that candidate $c$ gets from voters in $i$. We refer to the list $\textbf{v} = (\textbf{v}_{ic})_{i \in \mathcal{D}, c \in C}$ as the \emph{vote profile}. 

Let $\succ$ be a linear order over $C$; $a \succ b$ indicates that $a$ is favored over $b$. We consider the following two voting rules, which take the vote profile $\textbf{v}$ as their input.
\begin{itemize}
    \item \emph{Plurality over Voters} (PV): We say that a candidate $a$ \emph{beats} a candidate $b$ under PV if $\sum\limits_{i \in \mathcal{D}}\textbf{v}_{ia} > \sum\limits_{i \in \mathcal{D}}\textbf{v}_{ib}$ or $\sum\limits_{i \in \mathcal{D}}\textbf{v}_{ia} = \sum\limits_{i \in \mathcal{D}}\textbf{v}_{ib}$ and $a \succ b$; the winner is the candidate that beats all other candidates. Note that district weights $w_i$ are not relevant for this rule.
    \item \emph{Plurality over Districts} (PD): For each $i \in \mathcal{D}$ the winner $c_i$ in $i$ is chosen from the set $arg\:max_{c \in C} \textbf{v}_{ic}$, with ties broken according to $\succ$. Then, for each $i \in \mathcal{D}$, $c \in C$, we set $w_{ic} = w_i$ if $c = c_i$, else $w_{ic} = 0$. We say that a candidate $a$ \emph{beats} a candidate $b$ under PD if $\sum\limits_{i \in \mathcal{D}} w_{ia} > \sum\limits_{i \in \mathcal{D}} w_{ib}$ or $\sum\limits_{i \in \mathcal{D}} w_{ia} = \sum\limits_{i \in \mathcal{D}} w_{ib}$ and $a \succ b$. The winner is the candidate that beats all other candidates. Given the voting profile $\textbf{v}$, we take the winner in district $i$ to be $\mathcal{G}_\textbf{v}(i)$. We shall omit the subscript if the voting profile to be used is clear from the context.
\end{itemize}

For PV and PD, we define the \emph{social welfare} of a candidate $c \in C$ as the total number of votes that $c$ gets and the total weight that $c$ gets, respectively:
$$SW^{PV}(c) = \sum\limits_{i \in \mathcal{D}} \textbf{v}_{ic}, \:\:\:SW^{PD}(c) = \sum\limits_{i \in \mathcal{D}} w_{ic}$$
Hence, the winner under each voting rule is a candidate with the maximum social welfare. We now define some additional terminology that we use later in this paper: 
\begin{itemize}
    \item \emph{Score:} Given a voting profile $u$, under the PD rule, the score of a candidate $p$ is defined as the sum of weights of districts in which the candidate $p$ wins. Formally, $sc_u(p) = \sum\limits_{i \in \mathcal{D}} w_i \cdot \delta_{\mathcal{G}(i)p}$ (Here $\delta$ is the Kronecker delta function).
    \item \emph{Rivals:} Given a voting profile $u$, under the PD rule, we define the rivals of a candidate $p$ to be the set of candidates $\mathcal{C} \subset C$ such that for all candidates $c \in \mathcal{C}$, either $sc_u(c) > sc_u(p)$ or $sc_u(c) = sc_u(p)$ and $c \succ p$. 
\end{itemize}

We now consider the scenario where an election may be manipulated by an attacker, who wants to change the result of the election in favor of his preferred candidate $p$. The attacker has a budget $B_{\mathcal{A}} \in \mathbb{N}$ which enables him to change the voting profiles in at most $B_{\mathcal{A}}$ districts. For each district $i \in \mathcal{D}$, we define $\gamma_i,0\leq\gamma_i\leq n_i$ to be the number of votes that the attacker can manipulate in $i$. After the manipulation we have a voting profile $\overline{\textbf{v}} = (\overline{\textbf{v}}_{ic})_{i \in \mathcal{D}, c \in C}$. We formalize the notion of a manipulation as a set $M \subseteq \mathcal{D}$ and a voting profile $\overline{\textbf{v}}$ such that $|M| \leq B_{\mathcal{A}}$ where $\overline{\textbf{v}}_{ic} = \textbf{v}_{ic}$ for all $i \notin M$ and for all $i\in M$ it holds that $\sum\limits_{c \in C}\overline{\textbf{v}}_{ic} = n_i$ and $\sum\limits_{c \in C} \frac{|\overline{\textbf{v}}_{ic} - \textbf{v}_{ic}|}{2} \leq \gamma_i$. 

After the attacker, a socially minded defender with budget $B_\mathcal{D} \in \{0\} \cup \mathbb{N}$ can demand a recount in at most $B_\mathcal{D}$ districts. Formally, a recounting strategy $R$ is a set such that $R \subseteq M$ and $|R| \leq B_\mathcal{D}$. After the defender recounts, the vote counts of the districts in $R$ are restored to their original values. This results in a new voting profile $u = (u_{ic})_{i \in \mathcal{D}, c \in C}$ where for all $i \in \mathcal{D}\setminus M \cup R, u_{ic} = \textbf{v}_{ic}$ and for all $i \in M \setminus R, u_{ic} = \overline{\textbf{v}}_{ic}$. Then the voting rule $\mathcal{V} = \{PV, PD\}$ is applied to the profile $u$ that is obtained to obtain a winner (let it be $w'$) with ties broken according to $\succ$. The defender's objective is to maximize $SW^{\mathcal{V}}(w')$. It is a game of perfect information i.e. both entities know all information about the game. 

We say that the attacker wins if he has a manipulation strategy such that after the defender moves optimally, the preferred candidate of attacker i.e. $p$ wins. We define the following two decision problems based on voting rule $\mathcal{V} \in \{PV, PD\}$ and the two entities:
\begin{itemize}
    \item $\mathcal{V}$-\textsc{Man}: Given the voting rule $\mathcal{V}$, a voting profile $\textbf{v}$, the linear order $\succ$, a preferred candidate $p$, attacker budget $B_\mathcal{A}$, defender budget $B_{D}$ and weights $w_i$ and parameter $\gamma_i$ for each district $i \in \mathcal{D}$, does the attacker have a winning strategy?
    \item $\mathcal{V}$-\textsc{Rec}: Given the voting rule $\mathcal{V}$, a voting profile $\textbf{v}$, a manipulated voting profile $\overline{\textbf{v}}$, a preferred candidate $w$, the linear order $\succ$, defender budget $B_{D}$ and weights $w_i$ for each district $i \in \mathcal{D}$, can the defender make $w$ win by recounting at most $B_\mathcal{D}$ districts? 
\end{itemize}

Next we state the definitions and parameterized hardness results for decision problems that are used throughout this paper, and refer the reader to~\cite{downey99parameterized,CyganEtAl} for a detailed introduction to parameterized complexity and the framework of parameterized reductions. 
\begin{itemize}
    \item \domset: A set of vertices $D$ is a dominating set in graph $G$ if $V (G) = N_G[D]$. \domset{} asks that given a graph $G$ and a non-negative integer $k$, does there exist a dominating set of size at most $k$? \domset{} is known to be W[2]-hard parameterized by $k$~\cite{downey99parameterized}.
    \item \mulcliq: Given a graph $G$ and a partition of the vertex set $V$ into $k$ color classes $V_1, V_2, \hdots V_k$, \mulcliq{} asks whether there exists a clique of size $k$ with one vertex each from $V_1, V_2, \hdots V_k$. \mulcliq{} is known to be W[1]-hard parameterized by $k$~\cite{FellowsHRV09,Pietrzak03}.
\end{itemize}
    
    \section{Plurality over Voters (PV)}
    
In this section, we analyze the parameterized complexity of \pvrec{} and \pvman{} with different parameters. It is easy to see that \pvrec{} is FPT when parameterized by the number of districts ($k$), since we may guess the districts to be recounted. Since $k \leq n$, the problem is also FPT parameterized by the number of voters. 

\begin{proposition}
\label{lemma1}
    \pvrec{} is FPT when parameterized by the no. of districts $k$ or the number of voters $n$.
\end{proposition}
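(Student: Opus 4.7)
The plan is a straightforward brute-force enumeration, using that the defender's recounting strategy $R$ is a subset of $\mathcal{D}$ of size at most $B_\mathcal{D}$, so there are only $2^k$ candidates to try.

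First I would observe that any feasible recounting strategy satisfies $R \subseteq M \subseteq \mathcal{D}$, so $R$ is a subset of an at most $k$-element set. The algorithm simply enumerates every subset $R \subseteq \mathcal{D}$ of size at most $B_\mathcal{D}$ (equivalently at most $\min(B_\mathcal{D},k)$), which gives at most $2^k$ candidates. For each such $R$ we form the vote profile $u$ by restoring the original votes on districts in $R$ and keeping the manipulated votes on $M \setminus R$; this takes time polynomial in the input size. We then compute, for each candidate $c \in C$, the PV social welfare $SW^{PV}(c) = \sum_{i \in \mathcal{D}} u_{ic}$, determine the PV-winner $w'$ using the tie-break order $\succ$, and accept if $w' = w$ for some $R$.

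Correctness follows directly from the definition of \pvrec{}: the defender has a winning recount iff some $R$ in our enumeration certifies it. The running time is $O(2^k \cdot \mathrm{poly}(n,m,k))$, which is FPT in $k$. Since each district must contain at least one voter we have $k \le n$, so the same procedure is also FPT in $n$; in fact a coarser $O(2^n \cdot \mathrm{poly})$ bound follows immediately.

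There is no real obstacle here: because the recount must be contained in $\mathcal{D}$, the search space for the defender's move is already bounded by a function of $k$, and evaluating PV on a fixed profile is a polynomial-time arithmetic check. The only mild bookkeeping is making sure the enumeration respects $R \subseteq M$ (trivially handled by intersecting with $M$) and that the tie-break according to $\succ$ is applied consistently when declaring the winner $w'$.
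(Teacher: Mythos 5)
Your proposal is correct and matches the paper's argument exactly: the paper also observes that one may simply guess (enumerate) the at most $2^k$ subsets of districts to recount, verify each by recomputing the PV winner, and then notes $k \le n$ to transfer the bound to the number of voters. No differences worth noting.
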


We now show that \pvrec{} is W[2]-hard when parameterized by budget of the defender (\bd). Before describing the construction formally, we briefly outline the main idea. We reduce from the \domset, which is well-known to be W[2]-hard parameterized by the size of the solution, which we denote by $k$. Let $(G = (V,E), k)$ be an instance of \domset{}. We create an instance of \pvrec{} where we have candidates and districts corresponding to vertices of $G$, along with a special candidate $w$ who is our desired winner. To begin with, we have an ``immutable'' district --- one where the original and manipulated votes are identical --- that sets the baseline score of the special candidate at $n$. The number votes for any other candidate $c$ from this district is fixed to ensure that the total number of votes for $c$ is \emph{also} $n$. In a district corresponding to a vertex $v$, every candidate corresponding to a vertex $u \in N[v]$ gets one vote. In the original scenario, all voters in these districts vote only for some dummy candidates. The key is that a ``switch'' in a district corresponding to a vertex $v$ \emph{reduces} the vote count for all vertices in $N[v]$. Since $w$ receives no votes from any of the other districts, observe that the only way for $w$ to emerge as a unique winner is if all of the other candidates \emph{lose} votes from the switches. It is not hard to infer from here that the defender has a valid switching strategy if and only if $G$ has a dominating set of size at most $k$.  

\begin{lemma}
    \pvrec{} is W[2]-hard parameterized by \bd, the defender's budget.
\end{lemma}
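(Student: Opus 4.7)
The plan is to formalize the sketch preceding the lemma as a reduction from \domset{}, with the defender's budget playing the role of the \domset{} parameter $k$. Given an instance $(G = (V,E), k)$ of \domset{} with $N := |V|$, I build a \pvrec{} instance with candidate set $C = \{w\} \cup \{c_v : v \in V\} \cup \{d_v : v \in V\}$ and district set $\mathcal{D} = \{I\} \cup \{D_v : v \in V\}$, where $I$ is an ``immutable'' district. The linear order $\succ$ ranks every $c_v$ above $w$ and every dummy $d_v$ below $w$; I set $B_\mathcal{D} := k$ and fix the attacker's manipulated set to be $M = \{D_v : v \in V\}$, so that the defender is allowed to recount at most $k$ of the vertex-districts.

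Next, I specify the two profiles. In $I$, the votes are identical in $\textbf{v}$ and $\overline{\textbf{v}}$: $w$ gets exactly $N$ votes, each $c_v$ gets $N - |N[v]|$ votes, and no dummy gets any vote. In a vertex-district $D_v$ the original profile $\textbf{v}$ sends all $|N[v]|$ voters to the single dummy $d_v$, while the manipulated profile $\overline{\textbf{v}}$ gives one vote to $c_u$ for each $u \in N[v]$. A direct tally shows that under $\overline{\textbf{v}}$, both $w$ and every $c_u$ receive exactly $N$ votes, so by tie-breaking (since $c_u \succ w$) $w$ does not currently win. If the defender recounts $R \subseteq M$ corresponding to $S \subseteq V$, then $c_u$'s total becomes $N - |N[u] \cap S|$, $d_v$'s total becomes $|N[v]|$ if $v \in S$ and $0$ otherwise, and $w$ stays at $N$. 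Because $w \succ d_v$ and $|N[v]| \leq N$, $w$ always beats every dummy; because $c_u \succ w$, $w$ beats $c_u$ iff $|N[u] \cap S| \geq 1$. Hence $w$ wins iff $S$ dominates every vertex of $G$, and the constraint $|R| \leq B_\mathcal{D} = k$ is exactly the \domset{} size bound.

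The main obstacle I anticipate is the constant-setting bookkeeping inside $I$: the counts must match $w$'s total exactly so that each $c_u$ merely edges out $w$ on tie-breaking (rather than outright winning or outright losing), which pins down the value $N - |N[v]|$ for each $c_v$ in $I$; and one must use a separate dummy $d_v$ per district rather than a single shared dummy, since otherwise a shared dummy could overtake $w$ after a recount of many districts. Once these constants are in place and $\succ$ is oriented as above, the correspondence between dominating sets of $G$ and successful recount strategies is immediate. The construction is polynomial-time, every vote count is bounded by $O(N^2)$ (so unary encoding poses no issue), and the parameter satisfies $B_\mathcal{D} = k$, so the W[2]-hardness of \domset{} parameterized by $k$ transfers to \pvrec{} parameterized by $B_\mathcal{D}$.
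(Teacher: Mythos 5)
Your proposal is correct and is essentially the paper's own reduction: the immutable/baseline district fixing $w$ at $N$ and each $c_v$ at $N-|N[v]| = N-(d(v)+1)$, vertex districts whose manipulated profiles give one vote to each $c_u$ with $u\in N[v]$ and whose original profiles send all votes to a per-vertex dummy, the tie-breaking order $c_v \succ w \succ d_v$, and $B_\mathcal{D}=k$ all match. The equivalence argument (recounting $S$ lowers $c_u$ exactly when $S\cap N[u]\neq\emptyset$, so $w$ wins iff $S$ is a dominating set) is the same as in the paper.
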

\begin{proof}
    We present an FPT reduction from the \domset{} problem. Let $(G = (V,E), k)$ be an instance of \domset{}. Let $N = |V|$ and $M = |E|$. We begin by describing the construction of the reduced instance.

    
    \textbf{Districts:} We introduce a \emph{baseline} district $\mathcal{D}_0$. Further, for each vertex $v \in V$, we introduce a corresponding \emph{primary} district $\mathcal{D}_v$.
    
    \textbf{Candidates:} For each vertex $v \in V$ we introduce a \emph{main} candidate $c_v$ and a \emph{dummy} candidate $d_v$. We also have a \emph{special} candidate $w$. 
      
    \textbf{Voting Outcomes:} The voting outcomes are as follows. For ease of presentation, let $v \in V$ be arbitrary but fixed. 
    \begin{enumerate}
        \item The special candidate does not receive any votes from the primary districts in either the original or the manipulated settings. In particular, $\overline{\textbf{v}}_{\mathcal{D}_v,w} = \textbf{v}_{\mathcal{D}_v,w} = 0$.
        \item In the original election, the main candidates have no votes in the primary districts, that is, $\textbf{v}_{\mathcal{D}_v,c_u} = 0$ for all $u \in V$. 
        \item In the manipulated election, a main candidate $c_u$ has a single vote in its favor in the primary district $\mathcal{D}_v$ if and only if $u \in N_G[v]$. Formally,
        $$\overline{\textbf{v}}_{\mathcal{D}_v,c_u} = \begin{cases} 1 &\quad\text{if $u \in N_G[v]$}\\ 0 &\quad\text{otherwise}\\ \end{cases}$$
        \item In the original election, a dummy candidate $d_u$ has a score of $d(u)+1$ in the primary district corresponding to $u$, and a score of zero everywhere else. Formally,
        $$\textbf{v}_{\mathcal{D}_v,d_{u}} = \begin{cases} d(v)+1 &\quad\text{if $u = v$}\\0 &\quad\text{otherwise}\\ \end{cases}$$
        \item The dummy candidates receive no votes in the primary districts in the manipulated elections. 
        \item In the baseline district, the score of the main candidates is defined to ensure that their total score in the manipulated election is $N$. In particular, $\overline{\textbf{v}}_{\mathcal{D}_0,c_v} = \textbf{v}_{\mathcal{D}_0,c_v} = N - (d (v) + 1)$.
        \item The dummy candidates receive no votes in the baseline district in both the original and manipulated elections.
        \item The score of $w$ is $N$ in the baseline district in both the original and manipulated elections. In particular, $\overline{\textbf{v}}_{\mathcal{D}_0,w} = \textbf{v}_{\mathcal{D}_0,w} = N$.
    \end{enumerate}
    To summarize, the primary districts corresponding to a vertex $v$ have $d(v) + 1$ voters, and all main candidates corresponding to vertices in $N_G[v]$ get one vote each in the manipulated world; while the dummy candidate $d_v$ gets all the votes in the original world. Observe that in the manipulated election, all the candidates except the dummy candidates have a total score of $N$, while the dummy candidates have a score of zero. 
    
    We set \bd $= k$. The preferred candidate is $w$. We also work with the following tie-breaking order:
 $\hdots c_{v} \hdots \succ w \succ \hdots d_{vu} \hdots,$ where the main candidates are preferred over the special candidate, but the special candidate dominates the dummy candidates. This completes the description of the constructed instance. We now turn to the proof of equivalence. 

 \textbf{Forward Direction.} Let a dominating set $S$ of size at most $k$ be given. Now, select the districts $\mathcal{D}_v$ for all $v \in S$ to recount. After recounting, for every vertex $v$ in the dominating set, the votes of $c_u$ for all $u \in N_G[v]$ will decrease by one. Since $\bigcup\limits_{v \in S} N[v] = V$, all candidates corresponding to vertices will lose at least one vote each. Also, no dummy candidate can gain more than $N$ votes after recounting. So, $w$ has more votes than any main candidate and at least as many votes as any of the dummy candidates. Therefore, $w$ will win after recounting.
    
\textbf{Reverse Direction.} Conversely, suppose that the defender has a valid recounting strategy $\mathcal{Z}$ that results in making $w$ the winner. Since $|\mathcal{Z}|\le \bd$, so at most $k$ districts can be recounted. Observe that any optimal solution will not recount the baseline district, since it does not affect the outcome, so without loss of generality, every recounted district corresponds to a vertex $v \in V$. We claim that the vertices corresponding to the recounted districts, which we denote by $S$, constitute a dominating set in $G$.

We argue this by contradiction --- indeed, assume that $S$ is not a dominating set for $G$. Then there exists at least one vertex $u \in V$ such that $S \cap N[u] = \emptyset$. Then, observe that the score of the candidate $c_{u}$ remains $N$ after the recounting. In particular, $c_{u}$ has the same score as the special candidate $w$ after recounting, since the construction ensures that it is not possible to change the score of $w$ by recounting. Since the main candidates dominate the special candidate in the tie-breaking order, we have that $c_u$ wins the election, which contradicts our assumption that $\mathcal{Z}$ was a valid recounting strategy.
\end{proof}




We now turn our attention to \pvman{}. First, we prove that \pvman{} is FPT parameterized by the number of voters. This follows by first observing that $m \leq 2n$ without loss of generality, since at most $2n$ candidates can have a non-trivial score across the original and manipulated instances combined, and the remaining candidates are irrelevant to the instance. The algorithm can then proceed by guessing a manipulation strategy --- note that the space of all possible strategies is bounded once the candidates are bounded --- and then invoking the \pdrec{} algorithm from the previous section as a subroutine to verify the validity of the guessed strategy. Thus, we have the following.

\begin{proposition}
    \textsc{PV-Man} is FPT parameterized by $n$, the number of voters.
\end{proposition}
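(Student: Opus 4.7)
The plan is to build an FPT algorithm in $n$ by brute-force enumerating all attacker strategies over a reduced candidate set of size $O(n)$, and validating each candidate strategy using the \pvrec{} FPT algorithm from Proposition~\ref{lemma1} as a subroutine.

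First I would argue that $m \le 2n+1$ without loss of generality. Since the original profile has only $n$ voters, at most $n$ candidates receive a positive score in it. Any candidate $c \ne p$ that has zero original score and is ranked above $p$ in $\succ$ can be discarded, because giving such a $c$ a manipulated vote would only create a rival of $p$ that beats $p$ on ties, which is strictly worse than giving the same vote to a dummy ranked below $p$. Among the zero-score candidates ranked below $p$, the attacker cares only about how many such dummies are available, not their identities, so since any single manipulation redirects at most $\sum_i \gamma_i \le n$ votes in total, retaining only the $n$ lowest-ranked such dummies suffices. This leaves at most $n + 1 + n = 2n+1$ candidates, and a direct exchange argument converts any winning attacker strategy on the full instance into one on the reduced instance.

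Next I would enumerate all attacker strategies on the reduced instance. A strategy specifies a set $M \subseteq \mathcal{D}$ with $|M| \le B_\mathcal{A} \le k \le n$, together with a redistribution of the $n_i$ votes among the $\le 2n+1$ candidates in each $i \in M$, subject to the $\gamma_i$ constraint. The number of per-district redistributions is crudely bounded by $(n+1)^{2n+1}$, so the total number of enumerated strategies is at most $2^n \cdot (n+1)^{(2n+1)n}$, a function of $n$ alone. For each enumerated strategy I would verify whether the defender can prevent $p$ from winning: for each $c \in C \setminus \{p\}$, I call \pvrec{} on $(\textbf{v}, \overline{\textbf{v}}, c, B_\mathcal{D})$, and accept the strategy iff every such call returns NO. Since there are $O(n)$ such calls per strategy and each runs in FPT time by Proposition~\ref{lemma1}, the overall running time is $f(n)\cdot \mathrm{poly}(|I|)$.

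The main obstacle is the correctness of the candidate reduction. The attacker may deliberately split redirected votes across multiple below-$p$ dummies so that no single dummy exceeds $p$'s score, so it is not enough to retain only a single dummy; the bound $2n+1$ is chosen to leave room for up to $n$ such splits, and the exchange argument must handle both the splitting pattern and the tie-breaking priorities carefully so that the outcome of the election is preserved under every defender response.
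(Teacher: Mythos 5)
Your overall strategy coincides with the paper's: argue that only $O(n)$ candidates matter, enumerate all attacker strategies over the reduced instance (a function of $n$ alone), and verify each guessed manipulation by solving the defender's recounting problem. Your candidate-reduction argument is actually more careful than the paper's one-line claim ("at most $2n$ candidates have a non-trivial score across the original and manipulated instances"), and is correctly adapted to the manipulation setting, where the manipulated profile is chosen by the attacker rather than given in the input.

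The genuine gap is in the verification step. In this model the defender is not adversarial to $p$: her objective is to maximize $SW^{PV}(w')$, the social welfare of whoever wins after the recount, and the attacker wins iff $p$ is the winner under the defender's welfare-optimal response. Your test --- accept a manipulation iff \pvrec{} returns NO for every $c \neq p$ --- instead asks whether the defender is \emph{able} to make some other candidate win. That condition is sufficient but not necessary for the attacker to win: the defender may be able to restore some $c \neq p$ to victory with, say, $50$ votes, yet prefer the response under which $p$ wins with $60$ votes, since the latter yields higher welfare. Such a manipulation is winning for the attacker but would be rejected by your algorithm, so the enumeration could output NO on a YES-instance. The repair is easy and stays within FPT time: since any recount $R$ satisfies $R \subseteq M$ and $|M| \leq B_{\mathcal{A}} \leq k \leq n$, you can enumerate all $2^{|M|} \leq 2^{n}$ defender responses directly, identify the welfare-maximizing one(s), and check whether $p$ wins there, with no call to \pvrec{} at all. (The paper's own sketch is equally terse on this point; your added precision is what exposes the issue.)
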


We now show that \pvman{} is W[1]-hard parameterized by budget of the attacker (\ba). Before describing the construction formally, we briefly outline the main idea. We reduce from the \mulcliq{} problem, which is well-known to be W[1]-hard parameterized by the number of color classes, which we denote by $k$. Let $(G = (V = V_1 \uplus \cdots \uplus V_k,E), k)$ be an instance of \textsc{Multicolored-Clique}. In the reduced instance, we introduce a special candidate $w$ who is the preferred candidate of the defender. The rival candidates are candidates corresponding to color classes $\mathcal{R}_i$, ordered pairs of color classes $\mathcal{R}_{i j}$ and vertices $c_v$. We also introduce some \emph{dummy} candidates. We introduce districts corresponding to each $v \in V$ and each $e\in E$. Also, there exists a special district which is ``immutable'', which sets up the initial scores of all candidates such that they are equal to a large number (say $F$). Initially, $w$ has $0$ votes. The scores are set up in such a way that the attacker has to transfer votes of $k^2$ districts to $w$ to make her win. The scores are engineered to ensure that the attacker has a successful manipulation strategy if and only if these $k^2$ districts correspond to $k$ vertices and $\binom{k}{2}$ edges that form a multicolored clique in $G$. 

\begin{lemma}
    \textsc{PV-Man} is W[1]-hard parameterized by \ba, the attacker's budget.
\end{lemma}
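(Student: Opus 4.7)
The plan is to realize the sketch above via a parameterized reduction from \mulcliq{} in which the attacker's budget encodes the clique size. Given an instance $(G = (V_1 \uplus \cdots \uplus V_k, E), k)$, I would introduce the attacker's preferred candidate $p$, one \emph{color rival} $\mathcal{R}_i$ per color class, one \emph{pair rival} $\mathcal{R}_{ij}$ per unordered pair $\{i,j\}$, and one \emph{vertex rival} $c_v$ per vertex. The districts are a baseline district $D_0$ (made immutable by setting $\gamma_0 = 0$), a vertex district $D_v$ for each $v \in V$, and an edge district $D_e$ for each $e \in E$, with $\gamma_i = n_i$ in all primary districts. Inside $D_v$ with $v \in V_i$, votes are split between $\mathcal{R}_i$ and $c_v$; inside $D_e$ with $e = uv$, $u \in V_i$, $v \in V_j$, votes are split among $\mathcal{R}_{ij}$, $c_u$, and $c_v$. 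Finally $D_0$ pads every rival so they share a common initial PV score $F$, while $p$ starts at zero. I set $B_\mathcal{D} = 0$, $B_\mathcal{A} = k + \binom{k}{2}$, and break ties against $p$.

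For the forward direction, I would describe the honest attack corresponding to a multicolored clique $K = \{v_1, \dots, v_k\}$ with edge set $E_K$: switch every vote in $D_{v_1}, \dots, D_{v_k}$ and in every $D_e$ with $e \in E_K$ over to $p$. With carefully chosen multiplicities, $p$'s post-attack score lands on a target $T$ for which (i) each $\mathcal{R}_i$ drops by one (from the single color-$i$ vertex district switched), (ii) each $\mathcal{R}_{ij}$ drops by one (from the single color-pair edge district switched), and (iii) each $c_{v_i}$ drops by its $D_{v_i}$ contribution plus one from each of the $k-1$ clique edges incident to $v_i$. Uninvolved $c_v$ keeps its initial score $F$, and the baseline padding is chosen so that $F < T$. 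Hence $p$ strictly beats every rival.

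For the reverse direction, suppose the attacker wins within budget. The baseline district is inert, so only primary districts are switched. A counting argument on $p$'s achievable score forces exactly $k$ vertex districts and $\binom{k}{2}$ edge districts to be switched; driving each color rival and each pair rival below $T$ then forces exactly one vertex district per color class and one edge district per color pair. The heart of the proof is the consistency check: the multiplicities of $c_v$ in $D_v$ and in the incident $D_e$ are calibrated so that $c_{v_i}$ drops strictly below $T$ only when $D_{v_i}$ together with all $k-1$ incident color-pair edge districts sitting on $v_i$ are switched simultaneously. If some chosen edge of color pair $\{i,j\}$ is not incident to $v_i$, then $c_{v_i}$ retains a score at least $T$, so $p$ (losing ties) fails. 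This forces the chosen edges to form a clique on $\{v_1, \dots, v_k\}$, giving a multicolored clique in $G$.

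The main obstacle is step (iii): engineering the vote multiplicities inside the primary districts together with the padding in $D_0$ so that the consistency constraint is \emph{exactly} enforced and no alternative attacker strategy (two vertices of the same color, an edge incident to an unpicked vertex, and so on) can beat all three families of rivals within the tight budget. I anticipate assigning $c_v$ roughly $k-1$ votes in $D_v$ and one vote in each incident edge district, with the padding in $D_0$ and the target $T$ tuned so that $c_v$ needs exactly $k$ removals to drop strictly below $T$. Verifying the tight budget arithmetic---that every deviation from the clique-induced attack leaves some rival at least tied with $p$---is what makes the reduction work; the rest is bookkeeping.
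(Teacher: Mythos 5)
Your high-level architecture matches the paper's: a reduction from \mulcliq{} with rivals for color classes and color pairs, vertex and edge districts, an immutable baseline district, and a budget so tight that the attacker must transfer \emph{every} vote in exactly the right number of districts to $p$. But the gadget you identify as ``the main obstacle''---how the vertex candidates $c_v$ interact with the edge districts---is set up in a way that cannot be made to work, and the two directions of your argument rely on contradictory assumptions about it. In the forward direction you say every $c_v$ is padded to a common score $F$ with $F < T$, so uninvolved vertex candidates are harmless; but then the consistency check in the reverse direction evaporates, because \emph{no} $c_v$ ever threatens $p$, and the attacker may pick one vertex district per color and one edge district per color pair completely independently. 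The reduced instance would then be a yes-instance whenever every pair of color classes is joined by at least one edge, which is far weaker than the existence of a multicolored clique. If instead you calibrate things so that every $c_v$ starts at or above $p$'s target $T$ (as your reverse direction requires when you argue ``$c_{v_i}$ retains a score at least $T$''), the forward direction breaks: a vertex $u$ \emph{outside} the clique has its votes sitting in $D_u$ and in edge districts of edges incident to $u$, none of which are attacked, so $c_u$ never drops and beats $p$. Placing $c_v$'s edge-district votes in the districts of edges \emph{incident} to $v$ is exactly what makes both horns fail.

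The paper resolves this with an inverted incidence structure that your proposal is missing: it uses \emph{ordered} pairs of color classes, two secondary districts $\mathcal{D}_{uv}, \mathcal{D}_{vu}$ per edge, a budget of $k^2 = k + 2\binom{k}{2}$, and it gives $c_v$ (for $v \in V_i$) one vote in each secondary district $\mathcal{D}_{uw}$ with $u \in V_i \setminus \{v\}$---that is, in districts of same-color vertices' edges that $v$ is \emph{not} incident to---while $c_v$'s own primary district carries $k-1$ votes. All vertex candidates then start at $F + k - 2 > F$ and each must lose $k-1$ votes. A clique vertex $v_i$ loses them from its attacked primary district; every other $u \in V_i$ loses them precisely because the $k-1$ attacked secondary districts for the ordered pairs $(i,j)$ all have first coordinate $v_i$. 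If some chosen secondary district $\mathcal{D}_{xy}$ for pair $(i,j)$ had $x \neq v_i$, then $c_x$ could shed at most $k-2$ votes (it gets nothing from $\mathcal{D}_{xy}$ itself and its primary district is not in the attack), so $c_x$ survives---this is the forcing mechanism. To repair your construction you would need to replace your incident-edge vote placement with something of this non-incident, per-color-class flavor; as written, no choice of multiplicities and padding satisfies both directions simultaneously.
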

\begin{proof}
We demonstrate a parameterized reduction from the \mulcliq{} problem. Let $(G = (V = V_1 \uplus \cdots \uplus V_k,E), k)$ be an instance of \mulcliq{}. We begin by describing the construction.

\textbf{Districts:} There are two types of districts. We have a \emph{primary district} $\mathcal{D}_v$ for each vertex $v \in V$ and two \emph{secondary districts} $\mathcal{D}_{uv}$ and $\mathcal{D}_{vu}$ for each edge $e = (u,v) \in E$. Apart from these, there is a \emph{baseline district} $\mathcal{D}_0$.

\textbf{Candidates:} For each vertex $v \in V$ we will have a \emph{main} candidate $c_v$. Also, we have \emph{challenger} candidates corresponding to color classes $\mathcal{R}_i$'s and ordered pairs of color classes $\mathcal{R}_{ij}$'s. We introduce some \emph{dummy} candidates of an unspecified number, whose role is equalize the number of votes across primary and secondary districts. Finally, we have a \emph{special} candidate $w$.

\textbf{Voting profiles:} We introduce the following voting outcomes for the candidates. 

\begin{enumerate}
    \item The score of the special candidate $w$ is zero in all districts.
    \item A main candidate $c_v$ for $v \in V$ has a score of $k-1$ in the primary district corresponding to $v$, and a score of zero in all other primary districts. 

    \item A main candidate $c_v$ for $v \in V$ has a score of one in every secondary district corresponding to an edge $e = \{u,w\}$ that it is \emph{not} incident to, provided $u$ and $v$ share the same color class. In particular, if $v \in V_i$ and there are $t$ edges incident on $V_i \setminus \{v\}$, then $c_v$ has a score of one in $t$ secondary districts. Formally, assuming $v \in V_i$, we have: 
    
    $$\textbf{v}_{\mathcal{D}_{uw},c_v} = \begin{cases} 1
    &\quad\text{if $u \neq v$ and $u\in V_i$}\\ 0 &\quad\text{otherwise}\\ \end{cases}$$

    \item A challenger candidate corresponding to color class $i$ has one vote from any primary district corresponding to a vertex $v \in V_i$ and a score of zero from all other primary districts. In other words:
$$\textbf{v}_{\mathcal{D}_v,\mathcal{R}_i} = \begin{cases} 1 &\quad\text{if $v \in V_i$}\\ 0 &\quad\text{otherwise}\\ \end{cases}$$

    \item  A challenger candidate corresponding to an ordered pair of color classes $(V_i,V_j)$ has one vote from any secondary district $\mathcal{D}_{uv}$ corresponding to an edge $e \in E$ whose endpoints $u$ and $v$ are in color classes $V_i$ and $V_j$ respectively, and a score of zero from all other secondary districts. Note that the candidates $\mathcal{R}_{ij}$ and $\mathcal{R}_{ji}$ receive scores of one from distinct secondary districts. Specifically, we have: $\textbf{v}_{\mathcal{D}_{uv},\mathcal{R}_{i j}} = 1$ if $u \in V_i$ and $v \in V_j$.

    \end{enumerate}

Now, let $\ell$ be the size of the largest --- in terms of the number of voters --- among the primary and secondary districts constructed so far. For every primary or secondary district $D$ with $\nu(D)$ voters, we add $\ell - \nu(D)$ dummy voters and dummy candidates, and we let each dummy voter vote for a distinct dummy candidate. We let $F = \ell k^2$. 

We are now ready to specify the voting outcomes from the baseline district. these are simply designed to ensure that all primary candidates $c_v$ get $F + k-2$ votes and the challenger candidates get $F$ votes, which is easy to verify from the proposed outcomes below:

\begin{enumerate}
    \item $\textbf{v}_{\mathcal{D}_0,c_v} = F-1- \sum\limits_{e = \{u,v\} \in E} (\textbf{v}_{\mathcal{D}_{uv},c_v} + \textbf{v}_{\mathcal{D}_{vu},c_v}$)
    \item $\textbf{v}_{\mathcal{D}_0,\mathcal{R}_i} = F-\sum\limits_{u \in V}\textbf{v}_{\mathcal{D}_u,\mathcal{R}_i} - \sum\limits_{e = \{u,v\} \in E}(\textbf{v}_{\mathcal{D}_{uv},\mathcal{R}_i} + \textbf{v}_{\mathcal{D}_{vu},\mathcal{R}_i})$
    \item $\textbf{v}_{\mathcal{D}_0,\mathcal{R}_{i j}} = F-\sum\limits_{u \in V}\textbf{v}_{\mathcal{D}_u,\mathcal{R}_{i j}} - \sum\limits_{e = \{u,v\} \in E}(\textbf{v}_{\mathcal{D}_{uv},\mathcal{R}_{i j}} + \textbf{v}_{\mathcal{D}_{vu},\mathcal{R}_{i j}})$
\end{enumerate}

Note that apart from the above, the dummy candidates get $1$ vote each, and the special candidate $w$ has $0$ votes. Also, the attacker has no room to manipulate in the baseline district, that is, $\gamma_{\mathcal{D}_0} = 0$. On the other hand, the attacker can modify up to $\ell$ votes in the primary and secondary districts. Further, we set $\ba = k^2$ and $\bd = 0$. The preferred candidate is $w$. Finally, we impose the following tie-breaking order:

$$\hdots \mathcal{R}_i \hdots \succ \hdots \mathcal{R}_{i j} \hdots \succ \hdots c_{v} \hdots \succ w \succ \hdots \mbox{ dummies} \hdots.$$

This completes the description of the constructed instance. We now turn to the proof of equivalence. 

\textbf{Forward Direction.} Let a multi-colored clique $S$ of size $k$ be given. The attacker chooses the $k$ primary and $2 \binom{k}{2}$ secondary districts corresponding to the vertices and edges of $S$, and transfers \emph{all} the $\ell$ votes in these $k^2$ districts to the desired candidate $w$. The score of $w$ is now $F$. Further, note that that the scores of all challenger candidates has decreased by one to $F-1$, and the scores of all main candidates have decreased by $k-1$ as well, but for different reasons: for main candidates corresponding to vertices of the clique, the drop is directly from the recounting in the primary districts, while for any other main candidate, the drop is cumulative across $(k-1)$ relevant secondary districts. In particular, suppose $S \cap V_i := \{v_i\}$. Consider any $u \in V_i$ such that $u \neq v_i$, and observe that $c_u$ had a score of one in the following $(k-1)$ districts:

$$D_{v_i v_1}, \ldots, D_{v_i v_{i-1}},D_{v_i v_{i+1}}, \ldots, D_{v_i v_k},$$

which have indeed been attacked, and therefore the score of $c_u$ reduces by $k-1$. This leaves all candidates ranked ahead of the special candidate $w$ in the tie-breaking order with a score less than the final score of $w$, and the scores of the dummy candidates is either zero or one, thus they pose no threat to $w$. Therefore, $w$ wins the election under this attack, concluding the argument in the forward direction.


\textbf{Reverse Direction.}  Suppose that the attacker has a valid manipulation strategy $\mathcal{Z}$ that results in making $w$ the winner. With a budget of $k^2$ and manipulable districts with $\ell$ voters, recalling that $F = \ell k^2$, we observe that the maximum score that an attacker an achieve for the special candidate $w$ is $F$. Since challenger candidates have a score of $F$ and are ahead of $w$ in the tie-breaking order, we conclude that for each $i \in [k]$, $\mathcal{Z}$ must contain at least one primary district corresponding to a vertex from $V_i$ --- indeed, if not, the attack does not influence the score of challenger candidates corresponding to some color class, and attack will not result in a win for $w$. We claim that these vertices correspond to a clique in $G$. We argue this claim by contradiction. To begin with, we note that analogous to the vertex-based challenger candidates, the attack is forced to have a certain structure to account for the edge-based challeger candidates. In particular, for each ordered pair of color classes $(V_i,V_j)$, the attack must recount in a secondary district $\mathcal{D}_{pq}$ such that $p \in V_i$ and $q \in V_j$. Note that the existence of such a district implies that $(p,q) \in E$. Also, recalling the argument made earlier for the existence of primary districts corresponding to each color class in the attack, and combining this with the available budget of $k^2$, it is easy to see that \emph{any} successful attack has the following specific form: it involves exactly $k$ primary districts corresponding to $k$ vertices from $k$ different color classes, and $k(k-1)$ secondary districts corresponding to each ordered pair of color classes. 

Now, let the vertices derived from the attack be $\{v_1, \ldots, v_k\}$ where $v_i \in V_i$, and suppose, for the sake of contradiction, that $(v_i,v_j) \notin E(G)$. The challenger candidates $\mathcal{R}_{ij}$ and $\mathcal{R}_{ji}$ force the attacker to recount in two secondary districts corresponding to edges that have endpoints in $V_i$ and $V_j$. Suppose the secondary district recounted to account for $\mathcal{R}_{ij}$ was $D_{xy}$, with $x \in V_i$ and $y \in V_j$. Suppose, without loss of generality, that $x \neq v_i$. We know from the original score of $x$ and the maximum possible final score of $w$ that a successful attack needs to ensure that the score of $x$ reduces by at least $k-1$. However, note that this decrease cannot be addressed by a primary district in the attack $\mathcal{Z}$, since this is already ``used up'' for $v_i \neq x$. Also, among the remaining secondary districts chosen, it is easy to verify that there are at most $(k-2)$ that give a score of one to the candidate $x$, and therefore, we can only hope to reduce the score of $x$ by at most $k-2$ overall. Note that we are in this situation because $x$ has a score of zero in the district $\mathcal{D}_{xy}$, which might be intuitively viewed as a ``lost opportunity'' for reducing the score of $x$. This establishes that the attack in question is futile in it's effectiveness towards making $w$ win, clearly contrary to our assumption. Therefore, $\mathcal{Z}$ must choose primary districts which correspond to a multi-colored clique in $G$, and this completes the argument in the reverse direction.
\end{proof}

    \section{Plurality over Districts (PD)}

In this section, we analyze the parameterized complexity of \pdrec{} and \pdman{}. As with \pvrec, it is easy to see that \pdrec{} is also FPT when parameterized by the number of districts ($k$), since we may guess the districts to be recounted. Since $n \geq k$, it is also FPT parameterized by the number of voters. 

\begin{proposition}
    \pdrec{} is FPT when parameterized by the number of districts $k$ or the number of voters $n$.
\end{proposition}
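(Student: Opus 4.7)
The plan is to adapt the brute-force enumeration strategy used above for \pvrec{} to the PD setting. First I would observe that, without loss of generality, every district contains at least one voter: empty districts contribute nothing to either side's score and can be detected and discarded in polynomial time. Under this preprocessing step we have $k \leq n$, so any FPT algorithm parameterized by $k$ immediately yields one parameterized by $n$, and it suffices to handle the parameter $k$.

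For the parameter $k$, the algorithm enumerates every subset $R \subseteq \mathcal{D}$ with $|R| \leq B_\mathcal{D}$, of which there are at most $2^k$. For each such $R$, it constructs the post-recount profile $u$ by restoring the districts in $R$ to their original counts, computes the within-district winner $\mathcal{G}_u(i)$ for every $i \in \mathcal{D}$ via a straightforward plurality computation, aggregates the weights $w_i$ to obtain $SW^{PD}(c)$ for each candidate $c$, applies the tie-breaking order $\succ$, and tests whether the preferred candidate $w$ beats all other candidates. The instance is a yes-instance if and only if some $R$ succeeds. The total running time is $2^k \cdot \mathrm{poly}(n,m,k)$, which is FPT in both $k$ and $n$.

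There is no substantive obstacle here: correctness is immediate from the exhaustive nature of the search over the entire feasible strategy space, and the per-iteration work is a routine PD-winner computation. The only minor point that needs explicit justification is the $k \leq n$ reduction used for the $n$-parameterization, which is settled by the empty-district observation above.
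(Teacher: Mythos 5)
Your proposal is correct and matches the paper's argument: the paper likewise observes that one may simply guess (enumerate) the subset of districts to recount, giving an FPT algorithm in $k$, and then uses $k \leq n$ to transfer the result to the parameter $n$. Your explicit handling of empty districts and the per-iteration PD-winner computation are just slightly more detailed renderings of the same brute-force enumeration.
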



We now show that \pdrec{} is W[1]-hard parameterized by budget of the defender \bd{}. Before describing the construction formally, we briefly outline the main idea. We reduce from the \textsc{Multicolored-Clique  Problem}, which is well-known to be W[1]-hard parameterized by the no. of color classes, which we denote by $k$. Let $(G = (V = V_1 \uplus \cdots \uplus V_k,E), k)$ be an instance of \textsc{Multicolored-Clique}. In the reduced instance, we introduce a special candidate $w$ who is the preferred candidate of the defender. The rival candidates are candidates corresponding to color classes $\mathcal{R}_i$ and ordered pairs of color classes $\mathcal{R}_{ij}$. Further, we introduce candidates encoding the vertices $c_v$ and edges $h_e$ $\&$ $a_e$. We also introduce some \emph{dummy} candidates. We introduce \emph{two} districts corresponding to each $v \in V$ and \emph{five} districts corresponding to each $e\in E$. Also, there exists a baseline district for each candidate which is ``immutable'', which sets up the initial scores of all candidates such that they are equal to a large number (say $\lambda$). The scores are set up in such a way that there is no way to increase the score of $w$, thus we require to reduce the score of all the rivals by at least one while not increasing the scores of other candidates. But the districts and voting profiles are engineered so as to enforce that any recounting solution must have a certain structure, from which we can draw a correspondence to a subset of vertices which must in fact form a multi-colored clique of size $k$ in $G$.


\begin{lemma}
    \pdrec{} is W[1]-hard parameterized by \bd, the defender's  budget.
\end{lemma}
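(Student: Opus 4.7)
The plan is to complete the parameterized reduction from \mulcliq{} sketched above and prove equivalence. I would set the defender's budget to $\bd = k + 2\binom{k}{2} = k^2$: one recount per color class $i$ (to handle the rival $\mathcal{R}_i$) and one per ordered pair $(i,j)$ (to handle $\mathcal{R}_{ij}$). Because every baseline district is immutable and $w$ wins no other district, no recount can raise the score of $w$; since each rival starts tied with $w$ at $\lambda$ and is ranked above $w$ in the tie-breaking order, the defender succeeds if and only if \emph{every} rival loses at least one point and no non-rival candidate rises to score $\lambda$.

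I would engineer the two vertex-districts and five edge-districts so that (i) recounting one specific primary district of $v \in V_i$ flips that district's winner from $\mathcal{R}_i$ to $c_v$, and (ii) recounting one of the secondary districts of $e = uv$ with $u \in V_i$, $v \in V_j$ flips the internal winner from $\mathcal{R}_{ij}$ (or $\mathcal{R}_{ji}$) to $h_e$ or $a_e$. The remaining vertex and edge districts, together with carefully chosen initial baseline scores for $c_v$, $h_e$ and $a_e$, are to be calibrated so that an \emph{inconsistent} choice --- picking an edge district whose endpoints are not among the chosen vertices --- pushes one of $c_v$, $h_e$, $a_e$ to score $\lambda$, handing it the election by tie-breaking.

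For the forward direction, given a multicolored clique $S = \{v_1,\ldots,v_k\}$ with $v_i \in V_i$, I would have the defender recount the primary district of each $v_i$ together with two secondary districts, one per orientation, of each clique edge $v_iv_j$. A direct score bookkeeping then shows every rival drops by at least one while no other candidate rises above $\lambda$. For the reverse direction, starting from any successful strategy $\mathcal{Z}$ of size at most $k^2$, I would argue in three stages: each $\mathcal{R}_i$ forces a primary district of some $v_i \in V_i$ into $\mathcal{Z}$; each $\mathcal{R}_{ij}$ forces a secondary district of some edge with one endpoint in $V_i$ and one in $V_j$ into $\mathcal{Z}$; and the budget $k^2$ is saturated by exactly these choices. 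The consistency gadget then forces the edge chosen for $(i,j)$ to be precisely $v_iv_j$, so that $\{v_1,\ldots,v_k\}$ induces a clique.

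The main obstacle I expect is designing and verifying the consistency gadget: the auxiliary candidates $c_v, h_e, a_e$ together with the ``extra'' vertex and edge districts must convert a combinatorial mismatch (an edge not incident to the chosen vertex) into a concrete arithmetic violation (some non-rival reaching score $\lambda$). Calibrating the multiplicities so that a correct recount cancels exactly while an incorrect one strictly overshoots, all within the tight budget of $k^2$ and respecting the tie-breaking order, is the delicate part of the argument.
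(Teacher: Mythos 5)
Your high-level strategy matches the paper's: reduce from \mulcliq{}, tie every rival at a common score $\lambda$ with $w$ via immutable baseline districts, place the rivals above $w$ in the tie-breaking order so that each must lose at least one district, and force one primary-district recount per color class and one edge-district recount per ordered pair of color classes. However, the proposal defers exactly the step that carries the whole proof: the consistency gadget. You state that the remaining districts and baseline scores ``are to be calibrated'' so that an inconsistent edge choice pushes some non-rival to score $\lambda$, and you flag this calibration as the delicate part --- but without it there is no reduction, because with a generic choice of original winners for the edge districts the defender could pick \emph{any} edge between $V_i$ and $V_j$, independently of the chosen vertices, and the instance would merely test whether every pair of color classes is joined by some edge.

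Moreover, your budget $\bd = k + 2\binom{k}{2} = k^2$ is incompatible with the kind of gadget that actually works here. Under PD, recounting a district flips a single district winner, so every recount that lowers a rival's score necessarily raises someone else's; since all non-dummy candidates start at $\lambda$, that someone becomes a new rival who must in turn be knocked back down by \emph{further} recounts. The paper resolves this with a forced cascade: recounting an edge district promotes a helper candidate $h_e$, which forces a recount of a support district (weight two), which promotes an auxiliary candidate $a_e$, which forces recounts of two transfer districts whose original winners are $c_u$ and $c_v$, which finally forces a recount of a weight-$k$ critical district per involved vertex. It is precisely the arithmetic of this cascade --- each chosen vertex gains $k-1$ from transfer districts plus $1$ from its primary district, exactly cancelled by the weight-$k$ critical district --- that ties the chosen edges to the chosen vertices, and it is why the budget is $2k + 5\binom{k}{2}$ rather than $k^2$: the extra $k$ critical, $\binom{k}{2}$ support, and $2\binom{k}{2}$ transfer recounts are all forced. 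Your plan reserves no budget for any of these fix-up recounts, so either your gadget produces new rivals it cannot repair, or it produces none and enforces no consistency. The gap is genuine: you have the scaffolding of the reduction but not the mechanism that makes it correct.
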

\begin{proof}
 This hardness result follows from the following reduction from the \mulcliq{} problem. The given instance is the graph $G = (V,E)$ and the number of unique color classes, $k$, where $V_i$ denotes the $i^{th}$ color class. We begin by describing the construction of the reduced instance.

\textbf{Candidates:} For every color class $1 \leq i \leq k$ there is a candidate $\mathcal{R}_i$ corresponding to $V_i$. Further, for every pair of color classes $(i,j)$ such that $1 \leq i < j \leq k$, we introduce \emph{two} candidates $\mathcal{R}_{ij}$ and $\mathcal{R}_{ji}$. These will be the rival candidates of the reduced instance. Now we introduce candidates that encode the vertices and edges of the graph $G$. To begin with, for each vertex $v \in V$ we introduce two candidates $c_v$ and $d_v$, which we will refer to as the \emph{main} and \emph{dummy} candidates, respectively. Also, for every edge $e \in E$, we introduce two candidates $h_{e}$ and $a_{e}$, which we refer to as the \emph{helper} and \emph{auxiliary} candidates, respectively. Finally, we have a \emph{special} candidate $w$. To summarize, the overall set of candidates is:

$$C = \{R_i ~|~ i \in [k]\} \cup \left\{R_{ij}, R_{ji} ~|~ (i,j) \in \binom{[k]}{2} \right\} \cup \{c_v,d_v ~|~ v \in V \} \cup \{a_e, h_e ~|~ e \in E\}$$


\textbf{Districts:} We introduce the following districts.

\begin{enumerate}
    \item For each $v \in V$ we introduce a \emph{primary district} $D_v$ with weight one and a \emph{critical district} $D^\star_v$ with weight $k$. 
    \item For each $e = (u,v) \in E$, we introduce two \emph{edge districts} $D_{uv}$ and $D_{vu}$, one \emph{support district} $S_{e}$, and two \emph{transfer districts} $T_{e,u}$ and $T_{e,v}$. The support districts have weight two, while the remaining districts have weight one. 
    \item For each candidate $c \in C \setminus \{d_v ~|~ v \in V\}$, we introduce a \emph{baseline district} $B_c$ with a weight of $\lambda_c$, which will be specified in due course. 
\end{enumerate}


\textbf{Voting outcomes:} The voting outcomes in the original and manipulated districts are depicted in the table below.

\bgroup
\def\arraystretch{1.85}
\setlength{\tabcolsep}{9pt}
\setlength{\LTcapwidth}{\textwidth}
\begin{longtable}[c]{lllllllll}
    \label{tab:votingoutcomes}\\
    District Type      & $D_v$   & $D_v^\star$ & $D_{uv}$           & $D_{vu}$           & $S_e$ & $T_{e,u}$ & $T_{e,v}$ & $B_c$  \\\hline
    \endfirsthead
    \endhead
    Original Winner    & $c_v$   & $d_v$       & $h_e$              & $h_e$              & $a_e$ & $c_u$     & $c_v$   &  $c$   \\\hline
    Manipulated Winner & $R_{j}$ & $c_v$       & $\mathcal{R}_{ij}$ & $\mathcal{R}_{ji}$ & $h_e$ & $a_e$     & $a_e$   &  $c$   \\\hline
    Weight             & $1$     & $k$         & $1$                & $1$                & $2$   & $1$       & $1$     &  $\lambda_c$  \\
    \caption{For the winners depicted above assume that $v \in V_j$ and that $e = (u,v)$, and further that $u \in V_i$ and $v \in V_j$. The subscript $c$ in the last column corresponding to the baseline districts denotes an arbitrary non-dummy candidate.}
\end{longtable}
\egroup

Note that the voting outcome in the baseline districts is the same in the original and manipulated settings. It only remains to specify explicitly the weights of the baseline districts. We let $\lambda_w := (n+1)k + 6m$. Recall that this is the weight of the baseline district corresponding to the special candidate. For any non-dummy candidate $c$, let $s(c)$ be its score from the manipulated districts. We then set $\lambda_c := \lambda_w - s(c)$. With these weights, we ensure that all the non-dummy candidates tie for the same score (i.e, $\lambda_w$) in the manipulated election, and all dummy candidates have a score of zero. We note that all the weights introduced here are polynomially bounded. We set $B_\mathcal{D} = 2k + 5\times \binom{k}{2}$. The preferred candidate is $w$.

We enforce the following tie-breaking order: $\hdots \mathcal{R}_i \hdots \succ \hdots \mathcal{R}_{i j} \hdots \succ w \succ \hdots c_{v} \hdots \succ \hdots a_e \hdots \succ \hdots h_e \hdots \succ \hdots d_v \hdots.$






This completes the description of the constructed instance. We now turn to the proof of equivalence.

\textbf{Forward Direction.} Let a multi-colored clique $V^\star = \{v_1, \ldots, v_k\}$ of size $k$ be given, and without loss of generality, we assume that $v_i \in V_i$. Now, recount in districts $D_v$ and $D^\star_v$ for all $v \in V^\star$. Also, for every edge $e = (u,v)$ in $G[V^\star]$, we recount in the districts $D_{uv}$, $D_{vu}$, $S_e$, $T_{e,u}$ and $T_{e,v}$. 

We claim that this recounting strategy leads to a win for $w$. To begin with,  observe that the score of every rival candidate drops by one. Indeed, for all $i \in [k]$, the score of $\mathcal{R}_i$ reduces by one having lost the primary district $D_{v_i}$. Further, for all $(i,j) \in \binom{[k]}{2}$, the scores of $\mathcal{R}_{ij}$ and $\mathcal{R}_{ji}$ reduce by one each, having lost in the edge districts $D_{v_i v_j}$ and $D_{v_j v_i}$, respectively. It is easy to see that the score of $w$ and any candidate corresponding to vertices and edges \emph{not} involved in the clique remain unchanged. 

Now, consider the auxiliary and helper candidates $a_e, h_e$ corresponding to an edge $e$ in $G[V^\star]$. The score of the helper candidate increases by two in the edge districts used to beat the rival candidates, but also decreases by two because of the recounting in the support district corresponding to $e$. This causes the score of the auxiliary candidate to increase by two, but the recounting the two transfer districts again decreases its score by two. Therefore, in terms of total score, the auxiliary and helper candidates are back to where they started. 

Meanwhile, the recounting across all transfer districts causes the scores of all the main candidates corresponding to vertices of the clique to increase by $(k-1)$, and their score also increases by one in the recounted primary districts used to decrease the score of the rival candidates corresponding to vertices. However, recounting in the critical districts reduces their score by $k$, and we conclude that the net change of score is zero for the main candidates as well. The recounting across the critical districts causes the scores of some dummy candidates to increase by $k < \lambda_w$. The scores of all other dummy candidates remains unchanged at zero. Based on the tie-breaking order, it is easy to verify that $w$ wins under this score profile (all rival candidates ranked ahead score less than $w$ and all other candidates that are tied with $w$ are ranked below $w$ in the tie-breaking order). This concludes the argument in the forward direction. 




\textbf{Reverse Direction.} Conversely, suppose that the defender has a recounting strategy $\mathcal{Z}$ that results in making $w$ the winner. Since the rival candidates dominate $w$ in the tie-breaking order and have the same score as $w$ among the manipulated districts, it is imperative for the defender to force every rival candidate to lose at least one of the districts that it wins in the manipulated setting. In particular, $\mathcal{Z}$ must include at least $k$ primary districts and $2\binom{k}{2}$ edge districts. Let us say that a solution $\mathcal{Z}$ is \emph{well-formed} if it consists of $k$ primary districts, $k$ critical districts, and $2\binom{k}{2}$ edge districts, $\binom{k}{2}$ support districts and $2\binom{k}{2}$ transfer districts, and further, that for any $e = (u,v) \in E$:

$$D_{uv} \in \mathcal{Z} \iff D_{vu} \in \mathcal{Z} \iff S_{e} \in \mathcal{Z} \iff T_{e,u} \in \mathcal{Z} \mbox{ and } T_{e,v} \in \mathcal{Z} .$$

A well-formed solution naturally corresponds to a subset of $\binom{k}{2}$ edges, which we will now refer to as the \emph{affected edges}. Note that any vertex incident to an affected edge emerges as a threat to $w$ because of the recounting in the transfer districts $T_{e,u}$ and $T_{e,v}$, which are won by $c_u$ and $c_v$ respectively in the original election. Now, observe that the only way to ``fix'' this is to recount in a critical district corresponding to the vertex in question. Therefore, if the affected edges span $\ell$ vertices, we are forced to recount at least $\ell$ critical districts. Recall that the rival candidates corresponding to color classes impose a requirement of recounting $k$ primary districts. Thus, given our overall budget of $2k + 5\cdot\binom{k}{2}$, we conclude that the $\binom{k}{2}$ affected edges derived from any well-formed solution must in fact span exactly $k$ vertices, and it is straightforward to verify that these vertices will correspond to a multi-colored clique in $G$. 

From the discussion above, it suffices to show that any valid solution $\mathcal{Z}$ must, in fact, be a well-formed solution. To begin with, note that the rival candidates corresponding to pairs of color classes force a recount in $2\binom{k}{2}$ of the edge districts. We first argue that these must correspond to $\binom{k}{2}$ distinct edges, in other words, $D_{uv} \in \mathcal{Z} \iff D_{vu} \in \mathcal{Z}$. Indeed, suppose not. Consider the set $F \subseteq E$ of all edges corresponding to the recounted edge districts in $\mathcal{Z}$, and note that our assumption implies that $|F| > \binom{k}{2}$. 

Now observe that the recounting in the edge districts turns the corresponding helper edge candidates into rivals. The only way to reverse this damage is to recount in the support districts corresponding to these edges, which are the only districts that are won by the helper candidates in the manipulated election. However, these recounts in turn cause the ``partner'' auxiliary candidates to emerge as new rivals whose scores are \emph{two} more than the score of $w$ --- recall that the support districts had a weight of two. Again, based on the voting outcomes, we see that we are now forced to recount in \emph{both} the transfer districts corresponding to the auxiliary candidates, since these are again the only districts that are won by the auxiliary candidates in question. As a result of the recounting in the transfer districts, the score of every vertex incident to any edge in $F$ has strictly increased. Note, however, that since $|F| > \binom{k}{2}$, $F$ necessarily spans more than $k$ vertices in $G$, each of which are now a threat to $w$. This forces a recount of as many critical districts for $w$ to win, but this contradicts the available budget. Therefore, we conclude that $D_{uv} \in \mathcal{Z} \iff D_{vu} \in \mathcal{Z}$. It is easy to see that the recounting of the corresponding support and transfer districts are forced based on the voting outcomes and the arguments made earlier in this discussion. Thus we conclude that any valid solution is well-formed, which in turn leads to a natural selection of vertices corresponding to a clique in $G$, completing the argument in the reverse direction.
\end{proof}


We now turn our attention to \pdman. We observe that \pdman{} can also be shown to be W[1]-hard by a reduction from \pdrec. We briefly sketch the main idea: to begin with, we switch the roles of the manipulated profiles and the original ones, set the defender budget to zero and the attacker budget to the defender budget. We would also set up the votes in the districts to be such that the only meaningful manipulation by the attacker is to move the manipulated profile to the original one. The equivalence is based on repurposing a recounting strategy to an attacking one and vice-versa. 

    
    \section{Concluding Remarks}
    Our main contribution was to settle the parameterized complexity for the problems of recounting and manipulation when parameterized by the player budgets, for both the PD and PV implementations of the plurality voting rules. We also observed that these problems are FPT when parameterized by the number of voters, and that the recounting problem is FPT when parameterized by the number of districts as well. 

We make some remarks about directions for future work. In the setting of succinct input, the problems of recounting and manipulation are already para-NP-hard because of the NP-completeness for three candidates. When the votes are specified in unary is an interesting direction for future work. The dynamic programming algorithm proposed by~\cite{ElkindGORV19} already shows that the problem is in XP, parameterized by the number of candidates, and we leave open the issue of whether the problem is FPT. The problem of manipulation parameterized by the number of districts is another unresolved case. More broadly, it would be interesting to challenge the theoretical hardness results obtained here against heuristics employed on real world data sets. The issue of identifying and working with structural parameters is also an interesting direction for further thought. 


\section*{References}
\printbibliography[heading=none]
\end{document}